\documentclass{article}
\usepackage{graphicx} 
\usepackage{amsmath}
\usepackage{amsfonts}
\usepackage{natbib}
\usepackage{amsthm}

\usepackage[utf8]{inputenc}
\usepackage[T1]{fontenc}
\usepackage{lmodern}
\usepackage{geometry}
\usepackage{setspace}
\usepackage{graphicx}
\usepackage{amsmath, amssymb, amsthm}
\usepackage{booktabs}
\usepackage{caption}
\usepackage{authblk} 
\usepackage{natbib}
\usepackage{hyperref}
\usepackage{multirow}
\usepackage{subcaption}
\usepackage{algorithm}
\usepackage{algpseudocode}
\usepackage{xcolor}
\usepackage{lineno}
\usepackage{tcolorbox}
\hypersetup{
    colorlinks=true,
    linkcolor=blue,
    citecolor=blue,
    urlcolor=blue
}
\usepackage{doi}
\usepackage[printonlyused]{acronym}
\usepackage{tikz}
\usetikzlibrary{positioning, arrows.meta, decorations.pathreplacing, shapes.geometric, arrows.meta, positioning, calc, shadows.blur, backgrounds, fit}

\newcommand{\Var}{\mathrm{Var}_{\mathbb{P}^*} }
\newcommand{\Cov}{\mathrm{Cov}_{\mathbb{P}^*} }
\newtheorem{prop}{Proposition}
\newtheorem{theorem}{Theorem}
\newtheorem{lemma}{Lemma}
\newtheorem{cor}{Corollary}

\title{Sequential learning theory for Markov genealogy processes}
\author[1]{David J. Pascall}
\affil[1]{MRC Biostatistics Unit, University of Cambridge, Cambridge, UK}
\date{}

\begin{document}

\maketitle

\section{Abstract}

We introduce a filtration-based framework for studying when and why adding taxa improves phylodynamic inference, by constructing a natural ordering of observed tips and applying sequential Bayesian theory to the resulting filtration. The answer turns out to depend crucially on whether the estimand of interest changes structurally on taxa addition. By relating the estimands at intermediate tip numbers to the estimand on the unobserved full genealogy, we generate a classification of estimand types termed learning classes.

\section{Introduction}

A natural question in phylodynamic inference is what happens to inference when taxa are added or removed. Practitioners have observed that taxa addition can cause large changes to tree estimates \citep{collienne2025}, yet the theoretical foundations for understanding when and why taxa addition helps or hurts are largely absent. One intuition as to why taxa addition may not be beneficial in general is that the number of possible trees grows super-exponentially with the number of tips, so unless information grows correspondingly fast, the posterior under a Bayesian model may become increasingly diffuse across the space. However, standard results from sequential Bayesian analysis guarantee that posterior variance about a fixed target cannot increase on average as data accumulates, suggesting that taxa addition should improve precision. The resolution of this tension depends on what is being estimated: the impact of taxa addition on the estimate of a substitution rate under a molecular clock may be very different from the impact on the root age, due to the structural differences in how these estimands relate to the underlying tree.

In this work we develop a framework that allows the exploration of learning dynamics for complex phylodynamic estimands on taxa addition. By constructing a filtration on sequence data via a random ordering of observed tips, we are able to apply standard sequential Bayesian machinery tip addition. We show that the learning process depends critically on how the estimand changes as taxa are added, and that this dependence admits a classification into learning classes, each with distinct theoretical guarantees. Furthermore, by distinguishing between the value of an estimand with respect to the full unobserved genealogy and the analyst's observations of it, we formalise the inferential cost of unobserved lineages, demonstrating that for certain classes of estimands, even when the sequential estimand coincides with the limit target, the analyst's inability to know this fact results in a strictly positive, irreducible gap in posterior variance compared to an oracle. The framework is illustrated throughout using the time to the most recent common ancestor (tMRCA), where we identify a geometric condition on the sample, straddling, that serves as the concrete persistence event governing the estimand's behaviour.

\section{Mathematical setup}
We work on a probability space $(\Omega,\mathcal{F},\mathbb{P})$ supporting a random element $\Delta$.

\[\Delta = (\Theta,\mathcal{G},\Lambda)\]

We define the law $\mathbb{P}$ as follows:
\begin{gather*}
    \Theta \sim \pi\\
    \mathcal{G}|\Theta \sim P_\Theta\\
    h(\mathcal{G}) = \{(x_i,t_i)\}_{i=1}^{f(\mathcal{G})}\\
    \Lambda|f(\mathcal{G}) \sim \mathrm{Uniform}(\mathrm{Sym}(f(\mathcal{G})))
\end{gather*}

Let $\Theta$ be the random variable describing the parameters of an arbitrary Markov genealogy process (MGP) with sampling and mutation, per the definition of \citet{king2025genealogy}, and let $P_\Theta$ be the conditional law of the genealogy. We limit the focus to MGPs that generate genealogies on finite numbers of tips. We have some arbitrary prior $\pi$ on $\Theta$. Let $\mathcal{G}$ be the latent marked tree-valued random variable given by $P_\Theta$. Let $f(\mathcal{G})$ be the deterministic function of $\mathcal{G}$ counting the number of observed samples under the sampling process, and $z(\mathcal{G})$ be the deterministic function counting all the tips on the latent full genealogy, so $f(\mathcal{G})\leq z(\mathcal{G})$. Let $h(\mathcal{G})$ be the deterministic function extracting the set of paired sequences, $x$, and sampling times, $t$, of the observed nodes on the tree. These samples induce an implicit subtree of $\mathcal{G}$. $\Lambda|f(\mathcal{G})$ is uniformly distributed over $\mathrm{Sym}(f(\mathcal{G}))$, such that $\Lambda$ is conditionally independent of $(\Theta,\mathcal{G})$ given $f(\mathcal{G})$.

As the minimal phylogenetically interesting tree is on 3 tips, henceforth, we will work with the new conditional law, $\mathbb{P}^*$, defined by conditioning on $f(\mathcal{G})\geq m+1$ for some fixed integer $m \geq3$.

We use $\Lambda$ to define an ordering of the data, and then work with the natural filtration on this data.
\begin{gather*}
    Y_k=(x_{\Lambda(k)},t_{\Lambda(k)})\\
    D_n = (Y_1,...,Y_n)\\
    \mathcal{F}_n=\sigma(D_n)
\end{gather*}
By construction, $\mathcal F_n\subseteq \mathcal F_{n+1}$. Under $\mathbb P^*$, the filtration is defined for all $n\le m+1$, since $f(\mathcal G)\ge m+1$ a.s.

We then define $\hat{\pi}_n(\cdot)=\mathbb{P}^*((\Theta,\mathcal{G}) \in \cdot|\mathcal{F}_n)$, a deterministic function of $D_n$, so that $\hat{\pi}_n$ is the posterior distribution obtained by conditioning on $D_n$ and using the initial prior, $\pi$. Let $k$ be a measurable function of $(\Theta,\mathcal{G})$, and let $K=k(\Theta,\mathcal{G})$ denote the estimand of interest. The induced posterior on $K$ is given by the pushforward $\hat{\pi}_n^K=\hat{\pi}_n\circ k^{-1}$. The posterior variance of $K$ is then $\mathrm{Var}_{\hat{\pi}_n}(K)$, which equals $\Var(K|\mathcal{F}_n)$ a.s.

It is convenient to define a taxonomy of estimand classes. Permutation-invariant estimands are measurable functions of $(\Theta,\mathcal G)$, and include fixed estimands, $K_f$, and limit estimands, $K_\infty$, as subclasses. Permutation-variant estimands are functions of $(\Theta,\mathcal G,\Lambda)$.
A permutation-invariant estimand $K_\infty$ is a \textit{limit estimand} if there exists a family of hypothetical permutation-variant estimands $(\widetilde K_n(\xi))_{n \le z(\mathcal{G})}$, indexed by permutations $\xi \in \mathrm{Sym}(z(\mathcal{G}))$, such that for $\mathbb{P}$-a.e. $\omega \in \Omega$,
\[
\widetilde K_{z(\mathcal{G})}(\xi)(\omega) = K_\infty(\omega)
\quad \text{for all } \xi \in \mathrm{Sym}(z(\mathcal{G}(\omega))).
\]
with permutation indices are understood pathwise, such that on a realisation $\omega$, $\xi$ ranges over $\mathrm{Sym}(z(\mathcal G(\omega)))$. That is, for every ordering of the $z(\mathcal{G}(\omega))$ observations, the terminal value of the induced sequence coincides with $K_\infty(\omega)$. We refer to these sequences as \textit{hypothetical sequential estimands}.
Equivalently, for each $\xi$, the sequence $(\widetilde K_n(\xi))_{n \le z(\mathcal{G})}$ converges to $K_\infty$ as $n \to z(\mathcal{G})$, though in the finite setting this is simply a statement about the permutation-invariance of the terminal value.
\textit{Fixed estimands} are degenerate limit estimands for which $\widetilde K_n(\xi)=K_f$ for all $n$ and $\xi$. The observed sequential estimand under a realisation of the process is written $(K_n)_{n \le f(\mathcal G)}$.

Note that, for permutation-variant estimands, conditioning on $\mathcal{F}_n$ resolves the $\Lambda$-dependence mediated through the induced subtree a.s., since $D_n$ determines which sampled tips have been revealed among the first $n$ observations, so $K_n$ becomes a function of $(\Theta, \mathcal{G})$ conditional on $D_n$ and the pushforward remains well-defined.

\section{Learning classes}

The first thing that the setup above allows us to do is classify types of estimands into what we call \textit{learning classes}. Learning classes aim for a full classification of sequential estimands in MGPs by relating the permutation-variant estimands in the sequential estimand to their limit estimand, $K_\infty$. Sequential estimands are first divided by whether the mismatch path is monotone. They are then divided by whether equality with the limit estimand is achievable before $z(\mathcal{G})$ with positive probability and if it is, once reached, is persistent in all, some or no cases. Constant sequences are then separated into their own class of degenerate absorbing monotone estimands.

To make the classification precise, we introduce the notion of a persistence event sequence. Let $(\widetilde K_n(\xi))_{n \le z(\mathcal{G})}$ be a hypothetical sequential estimand indexed by $\xi \in \mathrm{Sym}(z(\mathcal{G}))$. We assume that estimands take values in a metric space $(S,d)$, so that the mismatch $d(K_\infty, \widetilde K_n(\xi))$ and the notion of convergence to a unique limit are well-defined. A \textit{persistence event sequence} is a family of events $(\widetilde E_n(\xi))_{n\le z(\mathcal G)}$ such that, for each admissible $\xi$,
\[
\widetilde E_n(\xi)\in \sigma(\Theta,\mathcal G),
\qquad
\widetilde E_n(\xi)\subseteq \widetilde E_{n+1}(\xi),
\]
and
\[
\widetilde E_n(\xi)(\omega)\implies\widetilde K_m(\xi)(\omega)=K_\infty(\omega)
\quad\text{for all }m\ge n
\]
for $\mathbb P$-a.e. $\omega\in \widetilde E_n(\xi)$, where, by slight abuse of notation, $\widetilde E_n(\xi)(\omega)$ denotes the truth value of the event $\widetilde E_n(\xi)$ at $\omega$.

That is, on the event $\widetilde E_n(\xi)$, equality with the limit estimand is attained at time $n$ and persists for all subsequent indices. Every hypothetical sequential estimand admits such a sequence (for example, by taking $\widetilde E_n(\xi) = \{\widetilde K_m(\xi)=K_\infty \ \forall m \ge n\}$). The classification therefore concerns whether persistence occurs with positive probability before $z(\mathcal{G})$, and whether equality with $K_\infty$ can occur without persistence.

When absorption times are discussed below, they are defined in terms of the persistence event sequence. The observed restriction of the persistence event sequence is written $(E_n)_{n \le f(\mathcal G)}$, paralleling the observed sequential estimand $(K_n)_{n \le f(\mathcal G)}$.

We classify sequential estimands according to pathwise properties of their associated hypothetical sequential estimands.

We say that the sequential estimand is \textit{monotonic} if for $\mathbb{P}$-a.e. $\omega$, and for all $\xi \in \mathrm{Sym}(z(\mathcal{G}(\omega)))$, the sequence $(d(K_\infty(\omega), \widetilde K_n(\xi)(\omega)))$ is non-increasing in $n$. Otherwise it is \textit{non-monotonic}.

We say that equality is \textit{achievable} if the set
\[
\{\omega \in \Omega : \exists \xi \in \mathrm{Sym}(z(\mathcal{G}(\omega))),\ \exists n < z(\mathcal{G}(\omega)) \text{ such that } \widetilde K_n(\xi)(\omega) = K_\infty(\omega)\}
\]
has positive $\mathbb{P}$-measure, and \textit{terminal} otherwise.

We say that persistence is \textit{achievable} if the set
\[
\{\omega \in \Omega : \exists \xi \in \mathrm{Sym}(z(\mathcal{G}(\omega))),\ \exists n < z(\mathcal{G}(\omega)) \text{ such that } \widetilde E_n(\xi)(\omega)\}
\]
has positive $\mathbb{P}$-measure. Equality not being achievable implies persistence is not achievable.

The classes are as follows:
\begin{itemize}
    \item \textbf{Fixed:} For $\mathbb{P}$-a.e. $\omega$, $\widetilde K_n(\xi)(\omega) = K_f(\omega)$ for all $\xi$ and $n$.
    \item \textbf{Absorbing monotonic:} Monotonic, non-constant estimands for which persistence is achievable, and for $\mathbb{P}$-a.e. $\omega$, for all $\xi$ and $n$,
    \[
    \widetilde K_n(\xi)(\omega) = K_\infty(\omega) \implies \widetilde E_n(\xi)(\omega).
    \]
    \item \textbf{Absorbing non-monotonic:} Non-monotonic, non-constant estimands for which persistence is achievable, and for $\mathbb{P}$-a.e. $\omega$, for all $\xi$ and $n$,
    \[
    \widetilde K_n(\xi)(\omega) = K_\infty(\omega) \implies \widetilde E_n(\xi)(\omega).
    \]
    \item \textbf{Mixed non-monotonic:} Non-monotonic, non-constant estimands for which persistence is achievable, and the set
    \[
    \{\omega : \exists \xi \in \mathrm{Sym}(z(\mathcal{G}(\omega))), n < z(\mathcal{G}(\omega)) \text{ such that } \widetilde K_n(\xi)(\omega)=K_\infty(\omega) \text{ and } \lnot\widetilde E_n(\xi)(\omega)\}
    \]
    has positive $\mathbb{P}$-measure.
    \item \textbf{Non-absorbing non-monotonic:} Non-monotonic, non-constant estimands for which equality is achievable but persistence is not.
    \item \textbf{Terminal monotonic:} Monotonic,  non-constant estimands for which equality is not achievable.
    \item \textbf{Terminal non-monotonic:} Non-monotonic, non-constant estimands for which equality is not achievable.
\end{itemize}

Mixed monotonic and non-absorbing monotonic classes can be defined in the obvious way, but are definitionally empty, as monotonicity ensures that any equality is persistent.

Some remarks are worth making at this point. Firstly, these definitions are stated under $\mathbb{P}$; they hold under $\mathbb{P}^*$ by restriction, since conditioning on $f(\mathcal{G}) \geq m+1$ does not alter the pathwise properties of the estimand sequences. Secondly, both the definition of sequential estimands and the learning class taxonomy can be extended to the non-finite case, but we leave this for future work. Thirdly, the classification system introduced here assumes only that estimands take values in a metric space $(S,d)$. The metric structure ensures that the notion of monotonic convergence to a unique limit estimand is well-defined. The decomposition results that follow require the stronger assumption that estimands take values in a real Hilbert space, so that subtraction, norms, and covariance (via the inner product) are well defined. The results below do not apply to the tree object directly, though they do apply to some functions of it, because tree spaces are not linear spaces and do not admit an intrinsic Hilbert space structure; there is no natural method of subtracting two trees to get a third tree. Finally, the results below that use the persistence event sequence and absorption time require specification of a concrete persistence event for the estimand in question.

These classes have example estimands that appear in phylodynamic practice. Tree length is terminal monotonic, as long as branch lengths are strictly positive; until every branch is in the tree, which requires every tip to be sampled, the tree length does not attain its latent value, but adding a tip always adds a branch with a positive length, so the estimand value increases monotonically with tip addition. Different measures of tree imbalance fall into different classes. Colless' index \citep{colless1982} at the root is non-absorbing non-monotonic, as it changes in an unstructured manner, but the difference can equal the limit difference multiple times as tips are added. Colless' index at a given non-root internal node is mixed non-monotonic, as equality with the limit difference can be reached multiple times as tips are added, and there exists a persistence event when all the taxa from the subtree subtended by the focal node have been sampled. The clock rate is fixed; adding a tip may change the estimated value, but doesn't structurally change what is being estimated. tMRCA-type estimands are absorbing monotonic; the monotonicity is easy to see, adding a tip can only push the tMRCA into the past or not change it, to see that it is absorbing we need to make concrete the persistence event. The relevant event is what we call \textit{straddling}. Consider the root of the latent genealogy, a set of tips straddles this root if the set contains a tip from at least two distinct subtrees descending from it. As soon as a set of tips straddles the root, the tMRCA of that set of tips is fixed to the root age, the earliest it can possibly be. Straddling for non-root nodes only implies that the tMRCA of that set is at least as old as the age of that node, but it is easy to introduce a notion of \textit{strict straddling}, which adds the condition that the set does not contain any tips that don't descend from that node, which carries the properties at the root through to non-root nodes. As straddling can be attained as early as two tips into an ordering, it meets the criteria to be a persistence event.

Identifying the class of an estimand in general is not simple, as it requires both a structural understanding of the mismatch path between the sequence and the limit for all possible permutations, identification of the relevant persistence event, and assessment of whether that event is possible under the process. However, this identification process allows the identification of certain learning properties, some of which are proven below.

\section{Learning for sequential estimands}

We can use the fact that standard learning applies to limit estimands in order to say something about learning for general phylodynamic sequential estimands, under the assumption that the target we actually care about is the limit target. From this point onward, unless otherwise stated, estimands are assumed to take values in a real Hilbert space so that conditional variances and covariances are well-defined, with $\Var(K|\mathcal{F}_n)$ being a.s. equal to $\mathbb E_{\hat\pi_n}[\|K-\mathbb E_{\hat\pi_n}[K]\|^2]$ and $\Cov(K,|\mathcal{F}_n)$ being a.s. equal to $\mathbb E_{\hat\pi_n}[
\langle K-\mathbb E_{\hat\pi_n}[K],L-\mathbb E_{\hat\pi_n}[L]\rangle
]$.

\begin{prop}[Standard learning for permutation-invariant estimands]
\label{standardlearning}
Under the construction above, $(\mathcal{F}_n)_{n\leq f(\mathcal{G})}$ is a filtration on $(\Omega,\mathcal{F},\mathbb{P}^*)$, and for any square-integrable permutation-invariant estimand, $K$, $\Var(K|\mathcal{F}_n)\geq \mathbb{E}_{\mathbb{P}^*}[\Var(K|\mathcal{F}_{n+1})|\mathcal{F}_n]$ a.s.
\end{prop}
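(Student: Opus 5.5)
The plan has two parts: first verify that $(\mathcal F_n)_{n\le m+1}$ is a filtration under $\mathbb P^*$, then derive the variance inequality from the law of total variance (equivalently, conditional Jensen) for Hilbert-valued square-integrable random elements.

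\textbf{Filtration.} Under $\mathbb P^*$ we have $f(\mathcal G)\ge m+1$ almost surely, so $\Lambda(k)$ is defined for every $k\le m+1$. Hence each $Y_k=(x_{\Lambda(k)},t_{\Lambda(k)})$ is a measurable function of $(h(\mathcal G),\Lambda)$. I will take this measurability as given from the construction. Since $D_n$ is the coordinate projection of $D_{n+1}$, we get $\sigma(D_n)\subseteq\sigma(D_{n+1})\subseteq\mathcal F$. Thus the $\mathcal F_n$ form an increasing family of sub-$\sigma$-algebras, i.e.\ a filtration. The index is bounded by $m+1$, and the statement's index $n\le f(\mathcal G)$ should be read on this common range.

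\textbf{Variance inequality.} Let $K$ be square-integrable and set $M_n=\mathbb E_{\mathbb P^*}[K\mid\mathcal F_n]$, a Bochner conditional expectation in the Hilbert space. I will first establish the conditional Pythagorean identity
\[
\mathbb E\big[\|K-M_n\|^2\mid\mathcal F_n\big]=\mathbb E\big[\|K-M_{n+1}\|^2\mid\mathcal F_n\big]+\mathbb E\big[\|M_{n+1}-M_n\|^2\mid\mathcal F_n\big].
\]
To prove it, expand $K-M_n=(K-M_{n+1})+(M_{n+1}-M_n)$. The cross term is
\[
\mathbb E\big[\langle K-M_{n+1},M_{n+1}-M_n\rangle\mid\mathcal F_{n+1}\big]=0,
\]
because $M_{n+1}-M_n$ is $\mathcal F_{n+1}$-measurable, so it can be pulled out of the inner product, and $\mathbb E[K-M_{n+1}\mid\mathcal F_{n+1}]=0$. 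Applying the tower property then kills the cross term after conditioning on $\mathcal F_n\subseteq\mathcal F_{n+1}$.

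Next I will identify the terms. The left side is $\Var(K\mid\mathcal F_n)$. The first term on the right equals
\[
\mathbb E\big[\mathbb E[\|K-M_{n+1}\|^2\mid\mathcal F_{n+1}]\mid\mathcal F_n\big]=\mathbb E_{\mathbb P^*}\big[\Var(K\mid\mathcal F_{n+1})\mid\mathcal F_n\big],
\]
again by the tower property. The last term is nonnegative, which gives the claimed inequality almost surely.

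\textbf{Role of permutation invariance and main obstacle.} Permutation invariance ensures that $K=k(\Theta,\mathcal G)$ is one fixed random element, not a sequence that changes with $n$. This is exactly what the tower argument needs. Note also that $\Var(K\mid\mathcal F_n)$ agrees almost surely with $\mathrm{Var}_{\hat\pi_n}(K)$ by the regular-version identification already stated in the setup.

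I expect the main technical obstacle to be the rigorous pull-out step for the inner product in the cross term. I would handle it by approximating $M_{n+1}-M_n$ in $L^2$ by simple $\mathcal F_{n+1}$-measurable functions. For these the pull-out is linear and immediate, and I would pass to the limit using Cauchy--Schwarz together with the square-integrability of $K$, which guarantees that all the terms are integrable.
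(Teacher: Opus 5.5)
Your proposal is correct and follows the same route as the paper: you check that $\mathcal F_n\subseteq\mathcal F_{n+1}$, use permutation invariance only to ensure $K$ is a single fixed target, and then apply the law of total variance for nested $\sigma$-algebras. Where the paper simply cites the law of total variance, you write out its Hilbert-space form as the conditional Pythagorean identity, with the cross term killed by the pull-out property.
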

\begin{proof}
    This is a standard result, e.g. see \cite{durrett2010prob}. We simply verify the hypotheses; by construction $\mathcal{F}_n \subseteq \mathcal{F}_{n+1}$, by permutation-invariance the estimand is a fixed target across the filtration (i.e. $K$ does not change with $n$), and by assumption $K$ is square-integrable, therefore the result follows by application of the Law of Total Variance. 
\end{proof}

This result says that if the target remains fixed on taxa addition, then the analyst's current posterior variance provides an upper bound on the expected posterior variance after the next observation, conditional on the current information. A more general version of this result is presented in Appendix 1. There, for estimands taking values in a separable psuedometric spaces, conditional variance is replaced by conditional Fréchet variance. This yields an analogous supermartingale learning result for fixed separable-psuedometric-space-valued estimands, including tree-valued estimands.
 
 We can get better understanding of the learning process for sequential estimands by decomposing the variance of the sequential estimand in terms of the limit estimand.

\begin{lemma}[Variance decomposition for sequential estimands]
\label{vardecomp}
For any estimand, $K_n$, part of a sequential estimand, $(K_n)_{n\leq f(\mathcal{G})}$, with square-integrable limit estimand, $K_\infty$, and square-integrable difference, $K_\infty-K_n$, then,
\begin{align*}
\Var(K_\infty|\mathcal{F}_n)=&
\Var(K_n|\mathcal{F}_n)+\\
&\Var(K_\infty-K_n|\mathcal{F}_n)+\\
&2\Cov(K_n,K_\infty-K_n|\mathcal{F}_n)
\end{align*}
a.s.
\end{lemma}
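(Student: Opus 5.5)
The plan is to write $K_\infty = K_n + (K_\infty - K_n)$ and expand the conditional variance of a sum in the Hilbert space, working with the posterior $\hat\pi_n$ as a regular conditional distribution given $\mathcal{F}_n$. Set $A = K_n$ and $B = K_\infty - K_n$. The one subtlety is integrability: the lemma assumes only that $K_\infty$ and $B$ are square-integrable. I will deduce square-integrability of $A = K_\infty - B$ from $\|A\|^2 \le 2\|K_\infty\|^2 + 2\|B\|^2$. Hence all three conditional second moments are finite almost surely, and $\Cov(A,B|\mathcal{F}_n)$ is well defined and finite by the conditional Cauchy--Schwarz inequality. The permutation-variance of $K_n$ causes no difficulty. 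As remarked after the setup, conditioning on $\mathcal{F}_n$ resolves the $\Lambda$-dependence, so $K_n$ is, given $D_n$, a measurable function of $(\Theta,\mathcal{G})$, and its pushforward under $\hat\pi_n$ is well defined.

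The main computation goes as follows. Let $\mu_A = \mathbb{E}_{\hat\pi_n}[A]$ and $\mu_B = \mathbb{E}_{\hat\pi_n}[B]$, which exist as Bochner integrals by square-integrability. Linearity of the conditional expectation gives $\mathbb{E}_{\hat\pi_n}[K_\infty] = \mu_A + \mu_B$ almost surely. Hence
\[
\|K_\infty - \mathbb{E}_{\hat\pi_n}[K_\infty]\|^2 = \|A-\mu_A\|^2 + \|B-\mu_B\|^2 + 2\langle A-\mu_A, B-\mu_B\rangle ,
\]
pointwise, using bilinearity and symmetry of the real inner product. I will then integrate against $\hat\pi_n$ and invoke the identifications stated just before Proposition~\ref{standardlearning}. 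These equate $\Var(\cdot|\mathcal{F}_n)$ with $\mathbb{E}_{\hat\pi_n}\|\cdot-\mathbb{E}_{\hat\pi_n}[\cdot]\|^2$ and $\Cov(\cdot,\cdot|\mathcal{F}_n)$ with the corresponding expected inner product. This yields the identity almost surely, with the three terms matching those in the statement.

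The only step needing care is justifying that the cross-term integrates to something finite, so that splitting the integral of the sum into the sum of integrals is legitimate. I would first handle this via $|\langle u,v\rangle| \le \tfrac12(\|u\|^2+\|v\|^2)$, which dominates the cross-term by integrable quantities under $\hat\pi_n$ for $\mathbb{P}^*$-a.e.\ $D_n$. A second point is that $\mathbb{P}^*$-integrability transfers to $\hat\pi_n$-integrability almost surely. This follows from the tower property, since $\mathbb{E}_{\mathbb{P}^*}[\mathbb{E}_{\hat\pi_n}\|K_\infty\|^2] = \mathbb{E}_{\mathbb{P}^*}\|K_\infty\|^2 < \infty$, and the same argument applies to $B$. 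Beyond this, the argument is the elementary variance-of-a-sum identity.
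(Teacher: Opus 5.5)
Your proposal is correct and takes the same route as the paper: substitute $K_\infty = K_n + (K_\infty - K_n)$ and apply the conditional Bienaymé identity, which is exactly the pointwise norm expansion you integrate against $\hat\pi_n$. Your integrability bookkeeping fills in details the paper leaves implicit, namely deducing square-integrability of $K_n$ from that of $K_\infty$ and $K_\infty - K_n$, and dominating the cross term.
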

\begin{proof}
    Note that $K_\infty=K_n+(K_\infty-K_n)$, substitute that into $\Var(K_\infty|\mathcal{F}_n)$, and apply the conditional version of Bienaymé's identity. 
\end{proof}

That is, we can decompose our uncertainty about our limit target into our uncertainty about the current sequential target, our uncertainty about the discrepancy between our current sequential target and the limit target, and how the two uncertainties relate to each other. 

\begin{theorem}[Conditional sequential learning decomposition]
\label{sequentiallearningdecomp}
For any pair of estimands, $K_n$ and $K_{n+1}$, part of a sequential estimand, $(K_n)_{n\leq f(\mathcal{G})}$, with square-integrable limit estimand, $K_\infty$, and square-integrable differences, $K_\infty-K_n$ and $K_\infty-K_{n+1}$, then,
\begin{align*}
    0\leq&\Var(K_{n}|\mathcal{F}_{n})-\mathbb{E}_{\mathbb{P}^*}[\Var(K_{n+1}|\mathcal{F}_{n+1})|\mathcal{F}_{n}]+\\
    &\Var(K_\infty-K_{n}|\mathcal{F}_{n})-\mathbb{E}_{\mathbb{P}^*}[\Var(K_\infty-K_{n+1}|\mathcal{F}_{n+1})|\mathcal{F}_{n}]+\\
    &2(\Cov(K_n,K_\infty-K_n|\mathcal{F}_n)-\mathbb{E}_{\mathbb{P}^*}[\Cov(K_{n+1},K_\infty-K_{n+1}|\mathcal{F}_{n+1})|\mathcal{F}_{n}])
\end{align*} a.s.
\end{theorem}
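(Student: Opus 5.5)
The plan is to combine Proposition~\ref{standardlearning} with Lemma~\ref{vardecomp}, applied at two consecutive indices. The key observation is that $K_\infty$ is permutation-invariant and square-integrable. Proposition~\ref{standardlearning} therefore applies to it directly:
\[
\Var(K_\infty|\mathcal{F}_n)-\mathbb{E}_{\mathbb{P}^*}[\Var(K_\infty|\mathcal{F}_{n+1})|\mathcal{F}_n]\geq 0 \quad \text{a.s.}
\]
The whole proof then consists of rewriting both terms of this difference using the variance decomposition.

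First, I will apply Lemma~\ref{vardecomp} at index $n$. This expresses $\Var(K_\infty|\mathcal{F}_n)$ as
\[
\Var(K_n|\mathcal{F}_n)+\Var(K_\infty-K_n|\mathcal{F}_n)+2\Cov(K_n,K_\infty-K_n|\mathcal{F}_n).
\]
The lemma needs square-integrability of $K_\infty$ and of $K_\infty-K_n$, and both are assumed. Square-integrability of $K_n=K_\infty-(K_\infty-K_n)$ then follows from the triangle inequality in the Hilbert space.

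Second, I will apply Lemma~\ref{vardecomp} at index $n+1$. This gives the analogous identity for $\Var(K_\infty|\mathcal{F}_{n+1})$ with $K_{n+1}$ in place of $K_n$. I then take $\mathbb{E}_{\mathbb{P}^*}[\,\cdot\,|\mathcal{F}_n]$ of both sides and split it across the three terms by linearity of conditional expectation. To justify the split, each term must be integrable:
\begin{itemize}
\item the conditional variances are nonnegative with finite expectation, since $\mathbb{E}[\Var(X|\mathcal{F}_{n+1})]\le \mathbb{E}\|X\|^2$;
\item the conditional covariance is bounded by conditional Cauchy--Schwarz, $|\Cov(X,Y|\mathcal{F}_{n+1})|\le \Var(X|\mathcal{F}_{n+1})^{1/2}\Var(Y|\mathcal{F}_{n+1})^{1/2}$, and hence is integrable by AM--GM.
\end{itemize}
Subtracting the two expanded forms and substituting into the supermartingale inequality gives exactly the displayed sum of three differences being nonnegative a.s.

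The only genuine subtlety, and the step I would check most carefully, is that $K_n$ and $K_{n+1}$ are permutation-variant, i.e.\ they depend on $\Lambda$. The conditional variances of these quantities must be well-defined random variables on the same probability space, so that the lemma's identities hold a.s.\ under $\mathbb{P}^*$. I would handle this by working throughout with $\mathbb{P}^*$-conditional moments of $\sigma(\Theta,\mathcal{G},\Lambda)$-measurable square-integrable variables. Nothing in Lemma~\ref{vardecomp} or in the tower property requires $\mathcal{F}_n$-independence of the estimand. Crucially, the inequality itself is only ever invoked for the fixed target $K_\infty$, so the fact that $K_n$ changes with $n$ causes no issue.

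Finally, I would note that all equalities hold a.s. Since only countably many null sets arise ($n\le m+1$), they can be combined into a single null set.
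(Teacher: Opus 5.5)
Your proposal is correct and follows essentially the same route as the paper: apply Proposition~\ref{standardlearning} to the permutation-invariant, square-integrable target $K_\infty$, expand $\Var(K_\infty|\mathcal{F}_n)$ and $\Var(K_\infty|\mathcal{F}_{n+1})$ via Lemma~\ref{vardecomp}, and rearrange. Your integrability checks are details the paper leaves implicit, and they are sound: square-integrability of $K_n$, and integrability of the conditional variances and covariances so that $\mathbb{E}_{\mathbb{P}^*}[\,\cdot\,|\mathcal{F}_n]$ can be split by linearity.
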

\begin{proof}
    Apply Lemma \ref{vardecomp} to $\Var(K_\infty|\mathcal{F}_n)$ and $\Var(K_\infty|\mathcal{F}_{n+1})$, then substitute the decompositions into Proposition \ref{standardlearning} and rearrange.
\end{proof}

This relates the previous result to learning, and applies generally across all the learning classes. We have already seen that, conditional on the current information, the expected posterior uncertainty about the limit target after the next observation does not exceed the current posterior uncertainty. This theorem gives a mechanistic breakdown of this learning process. The first pair of terms is analogous to a classical learning term: it is the current uncertainty about the sequential target minus its expected posterior uncertainty after the next observation, conditional on $\mathcal{F}_n$. The second pair is a "mismatch" term, measuring the expected change in our uncertainty about how far the current target is from the limit estimand. The final term is about the expected change in how the uncertainty about the current target and the uncertainty about the discrepancy covary. 

None of the three terms have an obvious sign. The first may be positive or negative, as adding a taxon can increase or decrease the analyst's uncertainty about the current sequential target. The second depends on how the analyst's uncertainty about the mismatch $K_\infty - K_n$ evolves; although the learning classes describe the pathwise behaviour of this mismatch, they constrain but do not determine the analyst's posterior uncertainty about it, since the analyst does not observe the mismatch directly. The third depends on the covariance structure of phylogenetic posteriors, which is complex and poorly understood. Despite this, the theorem guarantees that while these three terms can trade off against each other, their sum must remain non-negative by the improvement guarantees of Proposition \ref{standardlearning}.

For absorbing monotonic, absorbing non-monotone and mixed non-monotonic learning class estimands, we can get a further decomposition in terms of the persistence event, the event after which equality with the limit estimand is maintained a.s. This allows further characterisation of learning for these classes.

\begin{lemma}[Partitioned decomposition conditional on absorption status]
\label{absorptionlearning}
Let $(K_n)_{n\leq f(\mathcal{G})}$ be an absorbing monotonic, absorbing non-monotonic or mixed non-monotonic sequential estimand with square-integrable limit estimand, $K_\infty$, and square-integrable differences, $K_\infty-K_n$, for $n\leq f(\mathcal{G})$. Let $\tau$ be the random time at which the persistence event occurs in the observed persistence event sequence, if it does occur, that is $\tau=\inf\{n\in\{1,...,f(\mathcal{G})\}:E_n\}$, with $\tau=\infty$ if the set is empty, then, 
\begin{align*}
    0\leq&\mathbb{P}^*(\tau \leq n)\mathbb{E}_{\mathbb{P}^*}[\Var(K_\infty|\mathcal{F}_{n})-\Var(K_\infty|\mathcal{F}_{n+1})|\tau \leq n]+\\
    &\mathbb{P}^*(\tau=n+1)\mathbb{E}_{\mathbb{P}^*}[\Var(K_{n}|\mathcal{F}_{n})-\Var(K_\infty|\mathcal{F}_{n+1})|\tau=n+1]+\\
    &\mathbb{P}^*(\tau=n+1)\mathbb{E}_{\mathbb{P}^*}[\Var(K_\infty-K_{n}|\mathcal{F}_{n})|\tau=n+1]+\\
    &2\mathbb{P}^*(\tau=n+1)\mathbb{E}_{\mathbb{P}^*}[\Cov(K_n,K_\infty-K_n|\mathcal{F}_n)|\tau=n+1]+\\    
    &\mathbb{P}^*(n+1<\tau)\mathbb{E}_{\mathbb{P}^*}[\Var(K_{n}|\mathcal{F}_{n})-\Var(K_{n+1}|\mathcal{F}_{n+1})|n+1<\tau]+\\
    &\mathbb{P}^*(n+1<\tau)\mathbb{E}_{\mathbb{P}^*}[\Var(K_\infty-K_{n}|\mathcal{F}_{n})-\Var(K_\infty-K_{n+1}|\mathcal{F}_{n+1})|n+1<\tau]+\\
    &2\mathbb{P}^*(n+1<\tau)\mathbb{E}_{\mathbb{P}^*}[\Cov(K_n,K_\infty-K_n|\mathcal{F}_n)-\Cov(K_{n+1},K_\infty-K_{n+1}|\mathcal{F}_{n+1})|n+1<\tau]
\end{align*}
\end{lemma}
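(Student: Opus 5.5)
Taking expectations in Proposition \ref{standardlearning} and applying the tower property gives
\[
0\le \mathbb{E}_{\mathbb{P}^*}\bigl[\Var(K_\infty|\mathcal{F}_n)-\Var(K_\infty|\mathcal{F}_{n+1})\bigr].
\]
Proposition \ref{standardlearning} applies because $K_\infty$ is permutation-invariant and square-integrable. The plan is to split this unconditional expectation over the partition $\{\tau\le n\}$, $\{\tau=n+1\}$, $\{\tau>n+1\}$. Writing each piece as $\mathbb{P}^*(A)\,\mathbb{E}_{\mathbb{P}^*}[\,\cdot\,|A]$ by the law of total expectation yields three summands, with the convention that a summand vanishes when its event is null. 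The events $\{\tau\le n\}$ and $\{\tau=n+1\}$ are measurable, since each $E_k\in\sigma(\Theta,\mathcal{G},\Lambda)$ via the observed restriction. The events $\tau=\infty$ and $\tau>f(\mathcal G)$ fall in the third block, because $\{n+1<\tau\}$ includes $\tau=\infty$. On $\{\tau\le n\}$ I leave the integrand as it is, which gives the first line of the statement.

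On $\{\tau=n+1\}$ I use absorption: by the defining property of the persistence event sequence, $E_{n+1}$ implies $K_{n+1}=K_\infty$ a.s. I then apply Lemma \ref{vardecomp} at time $n$ to $\Var(K_\infty|\mathcal{F}_n)$, which splits it into $\Var(K_n|\mathcal{F}_n)+\Var(K_\infty-K_n|\mathcal{F}_n)+2\Cov(K_n,K_\infty-K_n|\mathcal{F}_n)$. The $\mathcal{F}_{n+1}$ term is kept as $\Var(K_\infty|\mathcal{F}_{n+1})$. Grouping $\Var(K_n|\mathcal F_n)-\Var(K_\infty|\mathcal F_{n+1})$ and placing the remaining two terms separately gives lines two to four. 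Nothing about the posterior at time $n+1$ needs to vanish here, only the algebraic split.

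On $\{\tau>n+1\}$ I apply Lemma \ref{vardecomp} at both times $n$ and $n+1$, using square-integrability of $K_\infty-K_n$ and $K_\infty-K_{n+1}$. Subtracting the two decompositions and using linearity of conditional expectation on the event yields the last three lines. Summing the three blocks reproduces the unconditional expectation, which is non-negative, and this completes the argument.

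The main obstacle is the conditioning on events involving $\tau$. These events are not $\mathcal{F}_n$-measurable in general, because the persistence event depends on the latent genealogy (straddling is not observed). So Proposition \ref{standardlearning} cannot be applied separately within each block; no block-wise non-negativity holds, and only the sum is controlled. This is why the statement is phrased unconditionally, with weights $\mathbb{P}^*(\cdot)$, rather than almost surely. I would make this explicit by first integrating the a.s.\ inequality and only then partitioning. A secondary check is that Lemma \ref{vardecomp} requires $n+1\le f(\mathcal{G})$ for $K_{n+1}$ to be defined. This holds for $n\le m$ under $\mathbb{P}^*$, so I would state the result for those $n$.
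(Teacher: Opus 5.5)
Your proof is correct. It shares the paper's skeleton: integrate the a.s.\ learning inequality, then split by total expectation over $\{\tau\le n\}$, $\{\tau=n+1\}$, $\{n+1<\tau\}$. Where it differs is in how each block is produced.

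The paper starts from Theorem \ref{sequentiallearningdecomp}, with Lemma \ref{vardecomp} already applied at both $n$ and $n+1$ everywhere. It then asserts that on $\{\tau\le n\}$ and $\{\tau=n+1\}$ the mismatch and covariance terms vanish, because $K_n=K_{n+1}=K_\infty$ there. You instead start from Proposition \ref{standardlearning} and apply Lemma \ref{vardecomp} only where the statement displays decomposed terms.

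Your route is the more careful one. The events $\{\tau\le n\}$ and $\{\tau=n+1\}$ are not $\mathcal{F}_n$- or $\mathcal{F}_{n+1}$-measurable. So the pathwise equality $K_{n+1}=K_\infty$ on $\{\tau=n+1\}$ does not make the posterior quantity $\Var(K_\infty-K_{n+1}|\mathcal{F}_{n+1})$ vanish on that event. Nor does it turn $\Var(K_{n+1}|\mathcal{F}_{n+1})$ into $\Var(K_\infty|\mathcal{F}_{n+1})$, since the posterior still charges configurations with $\tau>n+1$. The paper's final display is nonetheless right, but only because the three $\mathcal{F}_{n+1}$-terms recombine a.s.\ into $\Var(K_\infty|\mathcal{F}_{n+1})$ by Lemma \ref{vardecomp}. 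That recombination is exactly the identity you use, and you correctly note that nothing needs to vanish.

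Your version also makes plain that neither the learning-class hypothesis nor the specific definition of $\tau$ enters the inequality. Any $\mathcal{F}$-measurable three-set partition would give the same display. Consequently, your sentence invoking absorption on $\{\tau=n+1\}$ ($E_{n+1}\Rightarrow K_{n+1}=K_\infty$) is never used logically. Drop it, or recast it as motivation for which $\mathcal{F}_{n+1}$-terms to leave undecomposed.

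The paper's phrasing buys an interpretation linking each block to a structural state of the mismatch. Yours buys a rigorous justification, and the observation that the lemma is a reorganisation of Proposition \ref{standardlearning} rather than a consequence of absorption. Your remarks on restricting to $n\le m$ and on the null-event convention are sound additions.
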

\begin{proof}
    Take expectations of the result of Theorem \ref{sequentiallearningdecomp}, then partition by the events $\{\tau \leq n\}$, $\{\tau = n+1\}$ and $\{n+1<\tau\}$, using the Law of Total Expectation on the total random variable, which is valid because $\tau$ is measurable with respect to $\mathcal{F}$. Then simplify using the facts that on $\{\tau \leq n\}$, $K_n=K_{n+1}=K_\infty$ a.s., and on $\{\tau = n+1\}$, $K_{n+1}=K_\infty$ a.s., so the relevant mismatch and covariance terms vanish.
\end{proof}

This decomposition splits the learning regime into three discrete structural states for absorbing monotonic, absorbing non-monotonic and mixed non-monotonic sequential estimands: When $\tau \leq n$ the mismatch has already vanished and learning is classical. When $\tau = n+1$ the addition of the next taxon resolves the remaining mismatch. When $\tau > n+1$ permanent equality has not been reached (and for absorbing monotonic and absorbing non-monotonic sequential estimands mismatch is guaranteed non-zero) on taxon addition and the structure remains as it was in Theorem \ref{sequentiallearningdecomp}. There are two key points that are relevant to interpreting this result. Firstly, this remains a result averaged across the entire process. It does not say that within every structural state there is variance reduction on taxon addition in expectation, just that this decomposition is possible and the average result of Proposition \ref{standardlearning} applies. Secondly, the analyst cannot make use of this decomposition. What the analyst knows at step $n$ is contained in $\mathcal{F}_n$, but $\tau$ is not contained in $\mathcal{F}_n$ (or $\mathcal{F}_{n+1}$); it is not $\mathcal{F}_n$-measurable.

At this point we can introduce an oracle who knows what the analyst knows plus $\tau$. We may view their expanded filtration is $\sigma(D_n, \tau)=\mathcal{F}_n'$ as an initial enlargement of the analyst’s filtration by the random variable $\tau$, in the sense of the classical enlargement-of-filtrations literature \citep{grigorian2023}. This has immediate consequences.

\begin{cor}[Event-wise in expectation variance reduction for the oracle]
\label{oraclelearning}
Let $(K_n)_{n\leq f(\mathcal{G})}$ be an absorbing monotonic, absorbing non-monotonic or mixed non-monotonic sequential estimand with square-integrable limit estimand, $K_\infty$, and square-integrable differences, $K_\infty-K_n$, for $n\leq f(\mathcal{G})$, and let $\tau$ be the random time at which absorption occurs, as defined in Lemma \ref{absorptionlearning}, then under the oracle filtration $\sigma(D_n, \tau)=\mathcal{F}_n'$, 
\begin{align*}
\mathbb{E}_{\mathbb{P}^*}[\Var(K_n|\mathcal{F}'_n)|\tau \leq n]&\geq \mathbb{E}_{\mathbb{P}^*}[\Var(K_{n+1}|\mathcal{F}'_{n+1})|\tau \leq n]\\
\mathbb{E}_{\mathbb{P}^*}[\Var(K_\infty|\mathcal{F}'_n)|\tau = n+1]&\geq \mathbb{E}_{\mathbb{P}^*}[\Var(K_{n+1}|\mathcal{F}'_{n+1})|\tau = n+1]\\
\mathbb{E}_{\mathbb{P}^*}[\Var(K_\infty|\mathcal{F}'_n)|n+1<\tau]&\geq \mathbb{E}_{\mathbb{P}^*}[\Var(K_\infty|\mathcal{F}'_{n+1})|n+1<\tau]
\end{align*}
\end{cor}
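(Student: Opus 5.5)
The plan is to reduce each of the three inequalities to the law of total variance applied under the enlarged filtration $(\mathcal{F}'_n)$, and then to integrate over events that are $\mathcal{F}'_n$-measurable. First, $\mathcal{F}'_n=\sigma(D_n,\tau)\subseteq\sigma(D_{n+1},\tau)=\mathcal{F}'_{n+1}$, so $(\mathcal{F}'_n)$ is again a filtration. Proposition \ref{standardlearning} applies verbatim to any fixed square-integrable target $L$: its proof used only nestedness, the fact that the target does not depend on $n$, and square-integrability. This gives
\[
\Var(L|\mathcal{F}'_n)\geq \mathbb{E}_{\mathbb{P}^*}[\Var(L|\mathcal{F}'_{n+1})|\mathcal{F}'_n]\quad\text{a.s.}
\]
Second, the events $\{\tau\le n\}$, $\{\tau=n+1\}$ and $\{n+1<\tau\}$ all lie in $\sigma(\tau)\subseteq\mathcal{F}'_n$. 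Multiplying the displayed inequality by the indicator of any of these events and taking expectations yields the inequality for the conditional expectations given that event, by the tower property (when the event is $\mathbb{P}^*$-null the statement is vacuous). This is exactly where the oracle differs from the analyst: the partition is measurable for the oracle, so the averaged statement of Lemma \ref{absorptionlearning} can be localised event-wise.

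Next I would identify the sequential targets with $K_\infty$ on the relevant events, following the simplification in the proof of Lemma \ref{absorptionlearning}. On $\{\tau\le n\}$ we have $K_n=K_{n+1}=K_\infty$ a.s. Since this event is $\mathcal{F}'_n$-measurable, and conditional variances given a $\sigma$-field agree a.s. on a set in that $\sigma$-field wherever the random variables agree there, we get $\Var(K_n|\mathcal{F}'_n)=\Var(K_\infty|\mathcal{F}'_n)$ and $\Var(K_{n+1}|\mathcal{F}'_{n+1})=\Var(K_\infty|\mathcal{F}'_{n+1})$ a.s. on $\{\tau\le n\}$. The first inequality then follows by taking $L=K_\infty$. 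On $\{\tau=n+1\}$ only $K_{n+1}=K_\infty$ holds, and this event is $\mathcal{F}'_{n+1}$-measurable. Hence $\Var(K_{n+1}|\mathcal{F}'_{n+1})=\Var(K_\infty|\mathcal{F}'_{n+1})$ there, and the second inequality is again the case $L=K_\infty$. The third inequality is the case $L=K_\infty$ on $\{n+1<\tau\}$ directly, with no substitution needed. Square-integrability of $K_n$ and $K_{n+1}$ follows from $K_n=K_\infty-(K_\infty-K_n)$ and the stated hypotheses.

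The main obstacle is the locality step: making precise that conditional variances coincide a.s. on a set $A$ in the conditioning $\sigma$-field when the two random variables coincide on $A$. I would prove it from the identity $\mathbf{1}_A\mathbb{E}[X|\mathcal{H}]=\mathbb{E}[\mathbf{1}_AX|\mathcal{H}]$ for $A\in\mathcal{H}$, applied to both the first and second moments (in the Hilbert-space setting, to $X$ and to $\|X-\mathbb{E}[X|\mathcal{H}]\|^2$). A secondary subtlety is that $K_n$ is permutation-variant. Conditioning on $\mathcal{F}'_n\supseteq\mathcal{F}_n$ still resolves the $\Lambda$-dependence, as noted in the setup, but I must make sure $\Lambda$ enters only through $D_n$. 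If instead it enters through $\tau$, the same argument goes through because $\tau$ is adjoined to the filtration.
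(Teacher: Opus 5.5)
Your proposal is correct and follows the paper's route. You apply Proposition~\ref{standardlearning} under the enlarged filtration $\mathcal{F}'_n$ with the fixed target $K_\infty$, localise the inequality to the $\sigma(\tau)\subseteq\mathcal{F}'_n$-measurable events $\{\tau\le n\}$, $\{\tau=n+1\}$, $\{n+1<\tau\}$, and substitute $K_n=K_{n+1}=K_\infty$ or $K_{n+1}=K_\infty$ where absorption has occurred. Your explicit locality argument (that $\mathbf{1}_A\Var(X|\mathcal{H})=\mathbf{1}_A\Var(Y|\mathcal{H})$ a.s. when $X=Y$ a.s. on $A\in\mathcal{H}$) makes rigorous the ``simplifying'' step that the paper leaves implicit.
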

\begin{proof}
    As $\tau$ is $\mathcal{F}'_n$-measurable, Proposition \ref{standardlearning} applies under the oracle filtration. Conditioning the resulting inequality on the events $\{\tau \leq n\}$, $\{\tau = n+1\}$ and $\{n+1<\tau\}$, and simplifying using that on $\{\tau \leq n\}$, $K_n=K_{n+1}=K_\infty$ a.s., while on $\{\tau = n+1\}$, $K_{n+1}=K_\infty$ a.s., gives the result.
\end{proof}

There are two important takeaways from Corollary \ref{oraclelearning}. The first is that the oracle gets guarantees about classical learning for the sequential estimands that the analyst uses in practice, because the oracle knows that they're already the target that they care about. The analyst has to carry a burden of not knowing, which leads to them having to consider the mismatch and covariance terms when the oracle can ignore them. The second is that the oracle gets guaranteed learning in expectation eventwise. As we will see, the analyst has to pay for those same guarantees.

\begin{theorem}[Event-conditional learning]
\label{eventconditionallearning}
Let $K$ be a square-integrable permutation-invariant estimand, and let $E\in \mathcal{F}$ with $\mathbb{P}^*(E)>0$ be some event that is being conditioned on. Let $\mathcal{F}_n\subseteq\mathcal{F}_{n+1}$ be some nested $\sigma$-algebras within $\mathcal{F}$. Write $V_k = \Var(K|\mathcal{F}_k)$ and $Z_k = \mathbb{P}^*(E|\mathcal{F}_k)$, then,
\[ \mathbb{E}_{\mathbb{P}^*}[V_n-V_{n+1}|E]=\mathbb{E}_{\mathbb{P}^*}[V_n-V_{n+1}]+\frac{\Cov(V_n,Z_n)-\Cov(V_{n+1},Z_{n+1})}{\mathbb{P}^*(E)}\]
where the covariance terms are unconditional covariances under $\mathbb P^*$.
\end{theorem}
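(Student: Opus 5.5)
The plan is to rewrite the conditional expectation given $E$ as an unconditional expectation weighted by the indicator $\mathbf{1}_E$, and then to replace $\mathbf{1}_E$ by its conditional probability given the appropriate $\sigma$-algebra. For any integrable $V$ we have
\[
\mathbb{E}_{\mathbb{P}^*}[V\mid E]=\frac{\mathbb{E}_{\mathbb{P}^*}[V\mathbf{1}_E]}{\mathbb{P}^*(E)}.
\]
Applying this to $V_n$ and $V_{n+1}$ separately, rather than to their difference, is the key move, because each term then lives naturally in its own $\sigma$-algebra.

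First, note that $V_k=\Var(K\mid\mathcal{F}_k)$ is $\mathcal{F}_k$-measurable. Here, as in Proposition \ref{standardlearning}, $K$ is square-integrable, so $V_k$ is integrable. The tower property and pulling out what is known then give
\[
\mathbb{E}_{\mathbb{P}^*}[V_k\mathbf{1}_E]=\mathbb{E}_{\mathbb{P}^*}\big[V_k\,\mathbb{P}^*(E\mid\mathcal{F}_k)\big]=\mathbb{E}_{\mathbb{P}^*}[V_kZ_k].
\]
Next, expand the product using the definition of covariance:
\[
\mathbb{E}_{\mathbb{P}^*}[V_kZ_k]=\Cov(V_k,Z_k)+\mathbb{E}_{\mathbb{P}^*}[V_k]\,\mathbb{E}_{\mathbb{P}^*}[Z_k].
\]
Since $\mathbb{E}_{\mathbb{P}^*}[Z_k]=\mathbb{P}^*(E)$ for every $k$, dividing by $\mathbb{P}^*(E)$ yields
\[
\mathbb{E}_{\mathbb{P}^*}[V_k\mid E]=\mathbb{E}_{\mathbb{P}^*}[V_k]+\frac{\Cov(V_k,Z_k)}{\mathbb{P}^*(E)}.
\]
Subtracting the $k=n+1$ identity from the $k=n$ identity gives exactly the statement. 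The nesting $\mathcal{F}_n\subseteq\mathcal{F}_{n+1}$ is not actually needed for the identity itself. It matters only for interpretation, since it makes $\mathbb{E}_{\mathbb{P}^*}[V_n-V_{n+1}]\ge 0$ by Proposition \ref{standardlearning}.

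The one point that needs care is integrability, which ensures the covariance $\Cov(V_k,Z_k)$ exists and is finite. Because $0\le Z_k\le 1$ and $V_k\ge 0$, we have $|V_kZ_k|\le V_k$. Also $\mathbb{E}_{\mathbb{P}^*}[V_k]\le\Var(K)<\infty$ by square-integrability of $K$ and the law of total variance. So $V_kZ_k$ is integrable, and the covariance is finite even without assuming $V_k$ is square-integrable, provided it is understood as $\mathbb{E}[V_kZ_k]-\mathbb{E}[V_k]\mathbb{E}[Z_k]$. I would state this convention explicitly, since the paper's $\Cov$ notation is otherwise reserved for Hilbert-valued conditional covariances.

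Beyond this, there is no real obstacle; the argument is a direct computation.
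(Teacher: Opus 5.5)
Your proposal is correct and follows essentially the same route as the paper. Both write $\mathbb{E}_{\mathbb{P}^*}[V_k\mid E]=\mathbb{E}_{\mathbb{P}^*}[V_k\mathbf{1}_E]/\mathbb{P}^*(E)$, replace $\mathbf{1}_E$ by $Z_k$ via the tower property, split $\mathbb{E}_{\mathbb{P}^*}[V_kZ_k]$ into a covariance plus a product of means, and use $\mathbb{E}_{\mathbb{P}^*}[Z_k]=\mathbb{P}^*(E)$.

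Your extra remarks are also sound: the paper invokes the martingale property of $Z_k$, but only $\mathbb{E}_{\mathbb{P}^*}[Z_k]=\mathbb{P}^*(E)$ is used, which holds for any sub-$\sigma$-algebra without nesting. Your integrability argument ($0\le V_kZ_k\le V_k\in L^1$) fills in a detail the paper leaves implicit.
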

\begin{proof}
    By definition, $\mathbb{E}_{\mathbb{P}^*}[V_n - V_{n+1}|E] = \frac{\mathbb{E}_{\mathbb{P}^*}[(V_n - V_{n+1})\mathbf{1}_E]}{\mathbb{P}^*(E)}$. For each term, apply the tower property through the appropriate $\sigma$-algebra: since $V_n$ is $\mathcal{F}_n$-measurable, $\mathbb{E}_{\mathbb{P}^*}[V_n \mathbf{1}_E] = \mathbb{E}_{\mathbb{P}^*}[V_n Z_n]$, and since $V_{n+1}$ is $\mathcal{F}_{n+1}$-measurable, $\mathbb{E}_{\mathbb{P}^*}[V_{n+1} \mathbf{1}_E] = \mathbb{E}_{\mathbb{P}^*}[V_{n+1} Z_{n+1}]$. Expanding each product of expectations using $\mathbb{E}[XY] = \mathbb{E}[X]\mathbb{E}[Y] + \Cov(X,Y)$ and noting that $Z_k$ is a martingale with $\mathbb{E}_{\mathbb{P}^*}[Z_k] = \mathbb{P}^*(E)$ for all $k$, the result follows after dividing through by $\mathbb{P}^*(E)$.
\end{proof}

This allows us to directly compare the learning of the oracle and the analyst. For the oracle, who knows the status of the event, their posterior probability of the event is either 1 or 0, so the second term vanishes, and they're left with the term that by Proposition \ref{standardlearning} gives classical learning guarantees. The analyst, for whom the event is uncertain (i.e. their posterior probability of the event lies between 0 and 1), keeps the second term, and fails to learn eventwise in expectation if a negative covariance correction term between their estimand and the posterior probability is sufficient to overwhelm the process guarantees of Proposition \ref{standardlearning}.

\begin{lemma}[Analyst-oracle decomposition]
\label{analytoracledecomp}
Let $(K_n)_{n\leq f(\mathcal{G})}$ be an absorbing monotonic, absorbing non-monotonic or mixed non-monotonic sequential estimand with square-integrable limit estimand, $K_\infty$, and let $\tau$ be the random time at which absorption occurs. Let $\Var(K_\infty|\mathcal{F}_n)$ be the analyst's posterior variance and $\Var(K_\infty|\mathcal{F}'_n)$ be the oracle's posterior variance under $\mathcal{F}'_n=\sigma(D_n,\tau)$. Then,
\[\Var(K_\infty|\mathcal{F}_n)= \mathbb{E}_{\mathbb{P}^*}[\Var(K_\infty|\mathcal{F}'_n)|\mathcal{F}_n]+\Var(\mathbb{E}_{\mathbb{P}^*}[K_\infty|\mathcal{F}'_n]|\mathcal{F}_n)\] 
a.s.
\end{lemma}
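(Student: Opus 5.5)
This is the conditional law of total variance applied to the nested pair $\mathcal{F}_n\subseteq\mathcal{F}'_n$. The plan is to first check that nesting, then expand the Hilbert-space conditional variance into second moments, and finally apply the tower property term by term.

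\textbf{Step 1: nesting and integrability.} Since $\mathcal{F}'_n=\sigma(D_n,\tau)$, we have $\mathcal{F}_n=\sigma(D_n)\subseteq\mathcal{F}'_n$ directly. The variable $\tau$ is $\mathcal{F}$-measurable, as already used in Lemma \ref{absorptionlearning}, so $\mathcal{F}'_n\subseteq\mathcal{F}$. Square-integrability of $K_\infty$ under $\mathbb{P}^*$ ensures that all the conditional expectations and variances below exist as a.s. finite quantities. Hilbert-space-valued conditional expectations are understood as Bochner conditional expectations.

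\textbf{Step 2: moment form of the variance.} For any sub-$\sigma$-algebra $\mathcal{H}$, set $M_{\mathcal{H}}=\mathbb{E}_{\mathbb{P}^*}[K_\infty|\mathcal{H}]$. Then
\[\Var(K_\infty|\mathcal{H})=\mathbb{E}_{\mathbb{P}^*}[\|K_\infty\|^2|\mathcal{H}]-\|M_{\mathcal{H}}\|^2\]
a.s. To see this, expand $\|K_\infty-M_{\mathcal{H}}\|^2$ using the inner product. The cross term is $\langle K_\infty,M_{\mathcal{H}}\rangle$, and its conditional expectation equals $\|M_{\mathcal{H}}\|^2$ by pulling out the $\mathcal{H}$-measurable factor. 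This pull-out is the one point that needs care in the Hilbert setting. I would justify it by expanding in a countable orthonormal basis of the (separable) range of $K_\infty$ and applying the scalar pull-out coordinatewise. Alternatively, one can approximate $M_{\mathcal{H}}$ by simple $\mathcal{H}$-measurable functions and pass to the limit by dominated convergence.

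\textbf{Step 3: tower property.} Write $M'=M_{\mathcal{F}'_n}$ and $M=M_{\mathcal{F}_n}$. Applying Step 2 with $\mathcal{H}=\mathcal{F}'_n$ and taking $\mathbb{E}_{\mathbb{P}^*}[\cdot|\mathcal{F}_n]$, the tower property gives
\[\mathbb{E}[\Var(K_\infty|\mathcal{F}'_n)|\mathcal{F}_n]=\mathbb{E}[\|K_\infty\|^2|\mathcal{F}_n]-\mathbb{E}[\|M'\|^2|\mathcal{F}_n].\]
Next, $\mathbb{E}[M'|\mathcal{F}_n]=M$ by the tower property. So Step 2 applied to the random variable $M'$ gives
\[\Var(M'|\mathcal{F}_n)=\mathbb{E}[\|M'\|^2|\mathcal{F}_n]-\|M\|^2.\]
Adding these two identities, the $\mathbb{E}[\|M'\|^2|\mathcal{F}_n]$ terms cancel, leaving $\mathbb{E}[\|K_\infty\|^2|\mathcal{F}_n]-\|M\|^2$. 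By Step 2 with $\mathcal{H}=\mathcal{F}_n$, this equals $\Var(K_\infty|\mathcal{F}_n)$, which is the claim.

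\textbf{Main obstacle.} The only genuinely non-scalar point is the pull-out property of the inner product in Step 2. Everything else is the textbook law of total variance, with all equalities holding a.s. under $\mathbb{P}^*$. I would also note that no property of the learning class is actually needed. The absorbing or mixed assumption only guarantees that $\tau$ is meaningful; the identity holds for any enlargement of $\mathcal{F}_n$.
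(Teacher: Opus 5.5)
Your proposal is correct and takes the same route as the paper, which simply invokes the law of total variance for the nested pair $\mathcal{F}_n\subseteq\mathcal{F}'_n$; you have only written out the Hilbert-space moment expansion and tower-property bookkeeping that the paper leaves implicit. Your remark that no learning-class property is needed is also accurate, since the identity holds for any enlargement of $\mathcal{F}_n$.
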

\begin{proof}
    This is the Law of Total Variance applied to nested $\sigma$-algebras. As $\mathcal{F}_n \subset \mathcal{F}'_n$, the result follows.
\end{proof}

The second term here, the conditional variance of the oracle's posterior mean given only the analyst's information, is the component of the analyst's uncertainty that is driven by blindness to the absorption status, a direct expression of the cost of the analyst not knowing what the oracle knows.

\begin{theorem}[Irreducibility of the oracle gap]
\label{irreduciblegap}
Let $(K_n)_{n\leq f(\mathcal{G})}$ be a scalar absorbing monotonic sequential estimand with square-integrable limit estimand, $K_\infty$, and square-integrable differences of constant sign, $K_\infty-K_n$. Let $\tau$ be the random time at which absorption occurs, as defined in Lemma \ref{absorptionlearning}, with $0<\mathbb{P}^*(\tau<\infty|\mathcal{F}_{f(\mathcal{G})})<1$ on a set of positive probability. If $\mathbb{E}_{\mathbb{P}^*}[K_{f(\mathcal{G})}|\mathcal{F}_{f(\mathcal{G})}, \tau<\infty]=\mathbb{E}_{\mathbb{P}^*}[K_{f(\mathcal{G})}|\mathcal{F}_{f(\mathcal{G})}, \tau=\infty]$ on the set where $0<\mathbb{P}^*(\tau<\infty|\mathcal{F}_{f(\mathcal{G})})<1$, then, $\mathbb{E}_{\mathbb{P}^*}[\Var(\mathbb{E}_{\mathbb{P}^*}[K_\infty|\mathcal{F}'_{f(\mathcal{G})}]|\mathcal{F}_{f(\mathcal{G})})] > 0$.
\end{theorem}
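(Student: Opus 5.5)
The plan is to reduce the oracle's information to the single bit $\mathbf{1}\{\tau<\infty\}$, compute the resulting two-point conditional variance explicitly, and show it is strictly positive on the set where the analyst is uncertain about absorption. Write $N=f(\mathcal G)$, $p=\mathbb{P}^*(\tau<\infty|\mathcal F_N)$, $M=\mathbb{E}_{\mathbb P^*}[K_\infty|\mathcal F'_N]$, and let $A=\{0<p<1\}$, which has positive probability by hypothesis.

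\emph{Step 1: coarsen the oracle.} The oracle knows $\tau$ exactly, not just whether it is finite. I would introduce the intermediate $\sigma$-algebra $\mathcal H_N=\sigma(D_N,\mathbf 1\{\tau<\infty\})$, which satisfies $\mathcal F_N\subseteq\mathcal H_N\subseteq\mathcal F'_N$. The tower property gives $\mathbb{E}[M|\mathcal H_N]=\mathbb{E}[K_\infty|\mathcal H_N]$. The conditional law of total variance (as in Lemma \ref{analytoracledecomp}) then gives
\[
\Var(M|\mathcal F_N)\ \ge\ \Var(\mathbb{E}_{\mathbb P^*}[K_\infty|\mathcal H_N]\,|\,\mathcal F_N)\quad\text{a.s.}
\]
So it suffices to show the right-hand side has strictly positive expectation.

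\emph{Step 2: identify $K_\infty$ on each event.}
\begin{itemize}
\item On $\{\tau<\infty\}$ we have $\tau\le N$. By the defining property of the persistence event sequence, $K_N=K_\infty$ a.s.
\item On $\{\tau=\infty\}$, suppose $K_N=K_\infty$. The absorbing monotonic condition (equality implies the persistence event) would force $E_N$, and hence $\tau\le N$, a contradiction. So $K_\infty-K_N\neq 0$ a.s. there. By the constant-sign hypothesis, say $K_\infty-K_N>0$ a.s. on $\{\tau=\infty\}$; the other sign is symmetric.
\end{itemize}

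\emph{Step 3: compute the two-point variance.} Define
\[
a=\mathbb{E}[K_N|\mathcal F_N,\tau<\infty],\qquad b=\mathbb{E}[K_\infty-K_N|\mathcal F_N,\tau=\infty],
\]
both defined on $A$. By Step 2 and the hypothesis equating the two conditional means of $K_N$:
\begin{itemize}
\item on $\{\tau<\infty\}$, $\mathbb{E}[K_\infty|\mathcal H_N]=a$;
\item on $\{\tau=\infty\}$, $\mathbb{E}[K_\infty|\mathcal H_N]=\mathbb{E}[K_N|\mathcal F_N,\tau=\infty]+b=a+b$.
\end{itemize}
Conditionally on $\mathcal F_N$, $\mathbb{E}[K_\infty|\mathcal H_N]$ is therefore a two-point variable: it equals $a$ with probability $p$ and $a+b$ with probability $1-p$. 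Its conditional variance is $p(1-p)b^2$.

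\emph{Step 4: conclude.} On $A$ we have $p(1-p)>0$. Since $K_\infty-K_N>0$ a.s. on $\{\tau=\infty\}$ and this event has positive conditional probability there, $b>0$ a.s. on $A$. Hence
\[
\mathbb{E}[\Var(M|\mathcal F_N)]\ \ge\ \mathbb{E}[p(1-p)b^2\mathbf 1_A]\ >\ 0.
\]

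The main obstacle is the rigour of Steps 3 and 4. The elementary conditional expectations given "$\mathcal F_N$ and an event" must be interpreted as versions of $\mathbb{E}[\,\cdot\,|\mathcal H_N]$ restricted to the two atoms. The strict positivity of $b$ also needs an argument: I would show that a conditional expectation of an a.s. strictly positive variable, over an event of positive conditional probability, is strictly positive a.s. on $A$. This follows from $\mathbb{E}[(K_\infty-K_N)\mathbf 1_{\{\tau=\infty\}}\mathbf 1_B]>0$ for every $B\in\mathcal F_N$ with $B\subseteq A$ of positive measure. A minor technical point is that $N$ is random, so $\mathcal F_N$ is a stopped $\sigma$-algebra. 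I would handle this by working on each $\{N=j\}$, which is $\sigma(D_N)$-measurable because $D_N$ is the full observed set.
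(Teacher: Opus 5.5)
Your proposal is correct and follows the paper's proof essentially step for step. Both arguments coarsen the oracle to $\sigma(\mathcal{F}_{f(\mathcal{G})},\{\tau<\infty\})$, apply the conditional law of total variance, compute the binary-enlargement variance $p(1-p)b^2$, cancel the conditional means of $K_{f(\mathcal{G})}$ using the mean-independence hypothesis, and conclude from the strict constant sign of $K_\infty-K_{f(\mathcal{G})}$ on $\{\tau=\infty\}$. Two of your steps supply detail the paper only asserts: you derive $K_\infty\neq K_N$ on $\{\tau=\infty\}$ from the absorbing condition (equality would force $E_N$), and you prove $b>0$ almost surely on $\{0<p<1\}$.
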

\begin{proof}
    Write $A=\{\tau<\infty\}$ and $\mathcal{H}=\sigma(\mathcal{F}_{f(\mathcal{G})},A)$. Since $\mathcal{F}_{f(\mathcal{G})}\subset\mathcal{H}\subset\mathcal{F}_{f(\mathcal{G})}'$, the conditional Law of Total Variance gives $\Var(\mathbb{E}_{\mathbb{P}^*}[K_\infty|\mathcal{F}'_{f(\mathcal{G})}]|\mathcal{F}_{f(\mathcal{G})}) \geq \Var(\mathbb{E}_{\mathbb{P}^*}[\mathbb{E}_{\mathbb{P}^*}[K_\infty|\mathcal{F}'_{f(\mathcal{G})}]|\mathcal{H}]|\mathcal{F}_{f(\mathcal{G})})$. As $\mathcal{H}$ is a binary enlargement, the conditional variance identity for binary enlargements gives $\Var(\mathbb{E}_{\mathbb{P}^*}[\mathbb{E}_{\mathbb{P}^*}[K_\infty|\mathcal{F}'_{f(\mathcal{G})}]|\mathcal{H}]|\mathcal{F}_{f(\mathcal{G})})=\mathbb{P}^*(A|\mathcal{F}_{f(\mathcal{G})})\mathbb{P}^*(A^c|\mathcal{F}_{f(\mathcal{G})})(\mathbb{E}_{\mathbb{P}^*}[\mathbb{E}_{\mathbb{P}^*}[K_\infty|\mathcal{F}_{f(\mathcal{G})}']|\mathcal{F}_{f(\mathcal{G})},A]-\mathbb{E}_{\mathbb{P}^*}[\mathbb{E}_{\mathbb{P}^*}[K_\infty|\mathcal{F}_{f(\mathcal{G})}']|\mathcal{F}_{f(\mathcal{G})},A^c])^2$. By the Tower Property, $\mathbb{E}_{\mathbb{P}^*}[\mathbb{E}_{\mathbb{P}^*}[K_\infty|\mathcal{F}_{f(\mathcal{G})}']|\mathcal{F}_{f(\mathcal{G})},A]=\mathbb{E}_{\mathbb{P}^*}[K_\infty|\mathcal{F}_{f(\mathcal{G})},A]$ as $\sigma(\mathcal{F}_{f(\mathcal{G})},A)\subset\mathcal{F}_{f(\mathcal{G})}'$, and likewise on $A^c$. Since $K_\infty=K_{f(\mathcal{G})}+(K_\infty-K_{f(\mathcal{G})})$ and $K_\infty-K_{f(\mathcal{G})}=0$ a.s. on $A$, we have, $\mathbb{E}_{\mathbb{P}^*}[K_\infty|\mathcal{F}_{f(\mathcal{G})},A]-\mathbb{E}_{\mathbb{P}^*}[K_\infty|\mathcal{F}_{f(\mathcal{G})},A^c]=\mathbb{E}_{\mathbb{P}^*}[K_{f(\mathcal{G})}|\mathcal{F}_{f(\mathcal{G})},A]-\mathbb{E}_{\mathbb{P}^*}[K_{f(\mathcal{G})}|\mathcal{F}_{f(\mathcal{G})},A^c]-\mathbb{E}_{\mathbb{P}^*}[K_\infty-K_{f(\mathcal{G})}|\mathcal{F}_{f(\mathcal{G})},A^c]$. We can then cancel using the $\mathbb{E}_{\mathbb{P}^*}[K_{f(\mathcal{G})}|\mathcal{F}_{f(\mathcal{G})}, \tau<\infty]=\mathbb{E}_{\mathbb{P}^*}[K_{f(\mathcal{G})}|\mathcal{F}_{f(\mathcal{G})}, \tau=\infty]$ assumption. $K_\infty\neq K_{f(\mathcal{G})}$ a.s. on $A^c$ by the definition of the event, and by the constant sign assumption the difference has constant sign, so the difference has strict constant sign a.s. on $A^c$. Therefore, $\mathbb{E}_{\mathbb{P}^*}[K_\infty-K_{f(\mathcal{G})}|\mathcal{F}_{f(\mathcal{G})},A^c]\neq0$ whenever $\mathbb{P}^*(A^c|\mathcal{F}_{f(\mathcal{G})})>0$. Hence the squared term above is strictly positive on the set where $0<\mathbb{P}^*(A|\mathcal{F}_{f(\mathcal{G})})<1$, which has positive probability by assumption. Taking expectations with respect to the process gives the required result.
\end{proof} 

That is, even after observing all sampled tips, the analyst's posterior variance strictly exceeds the expected oracle posterior variance, assuming sequential mean-independence from absorption status. Sequential mean-independence from absorption status, $\mathbb{E}_{\mathbb{P}^*}[K_{f(\mathcal{G})}|\mathcal{F}_{f(\mathcal{G})}, \tau<\infty]=\mathbb{E}_{\mathbb{P}^*}[K_{f(\mathcal{G})}|\mathcal{F}_{f(\mathcal{G})}, \tau=\infty]$, encodes the assumption that knowing whether your current sequential target already equals the limit target shouldn't impact your expected estimate of that current target. 

This is very natural assumption for many phylogenetic estimands. Consider the tMRCA of all the tips in the current sample. Knowing whether the current sample has straddled the root should not change the estimate of the tMRCA of the current sample set, but it should still impact your estimate of the age of the root of the full latent genealogy, the sequential mean-independence from absorption status assumption encodes precisely this intuition.

\section{Acknowledgements}
This research was supported by the National Institute for Health and Care Research (NIH) Cambridge Biomedical Research Centre (NIHR203312). The views expressed are those of the authors and not necessarily those of the NIHR or the Department of Health and Social Care. We would like to thank the attendees of the Neural Inference for Bayesian Phylodynamics and Genetic Epidemiology workshop supported by the EPSRC Probabilistic AI Hub (EP/Y028783/1) for helping to find examples of phylogenetically relevant estimands for the different learning classes.

\section{Appendix 1}
\begin{prop}[Standard learning for fixed metric-space-valued estimands]
\label{metricstandardlearning}
Let $(S,d)$ be a separable metric space and let $K$ be a permutation-invariant $S$-valued estimand satisfying $\mathbb{E}_{\mathbb{P}^*}[d^2(K,s_0)]<\infty$ for some $s_0 \in S$. Then the conditional Fr\'{e}chet variance of $K$ given $\mathcal{F}_n$, $\inf_{s \in S} \mathbb{E}_{\mathbb{P}^*}[d^2(K,s)|\mathcal{F}_n]$, is a supermartingale. That is, for all $n < f(\mathcal{G})$, $\mathbb{E}_{\mathbb{P}^*}[\inf_{s \in S} \mathbb{E}_{\mathbb{P}^*}[d^2(K,s)|\mathcal{F}_{n+1}]|\mathcal{F}_n] \leq \inf_{s \in S} \mathbb{E}_{\mathbb{P}^*}[d^2(K,s)|\mathcal{F}_n]$ a.s.
\end{prop}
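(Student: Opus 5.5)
The approach is the standard ``infimum of conditional expectations is a supermartingale'' argument, with some care about measurability of the infimum over an uncountable set. Write $V_n=\inf_{s\in S}\mathbb{E}_{\mathbb{P}^*}[d^2(K,s)|\mathcal{F}_n]$.

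The first step is to make $V_n$ well-defined as a random variable. Fix a countable dense set $\{s_j\}_{j\ge1}\subseteq S$, which exists by separability. For each $s$, the triangle inequality gives $d^2(K,s)\le 2d^2(K,s_0)+2d^2(s_0,s)$, so $d^2(K,s)$ is integrable and $\mathbb{E}_{\mathbb{P}^*}[d^2(K,s)|\mathcal{F}_n]$ has a version that is finite a.s. We then define
\[ V_n:=\inf_{j}\mathbb{E}_{\mathbb{P}^*}[d^2(K,s_j)|\mathcal{F}_n], \]
which is $\mathcal{F}_n$-measurable as a countable infimum.

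To justify replacing $S$ by $\{s_j\}$, note that for any $s,s'$ we have pointwise $|d(K,s)-d(K,s')|\le d(s,s')$. Hence, by the conditional Minkowski inequality,
\[ \bigl|\mathbb{E}[d^2(K,s)|\mathcal{F}_n]^{1/2}-\mathbb{E}[d^2(K,s')|\mathcal{F}_n]^{1/2}\bigr|\le d(s,s') \quad \text{a.s.} \]
This makes the map $s\mapsto \mathbb{E}[d^2(K,s)|\mathcal{F}_n]$ continuous along the dense set. Consequently, the countable infimum agrees with any reasonable definition of the infimum over $S$; in particular, on the exceptional-null-set-free version it equals $\inf_{s\in S}$. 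Since $V_n\le \mathbb{E}[d^2(K,s_0)|\mathcal{F}_n]$, $V_n$ is nonnegative and integrable.

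The main step is the supermartingale inequality. For each fixed $j$, the tower property together with $\mathcal{F}_n\subseteq\mathcal{F}_{n+1}$ (from the construction used in Proposition \ref{standardlearning}) and monotonicity of conditional expectation give
\[ \mathbb{E}_{\mathbb{P}^*}[V_{n+1}|\mathcal{F}_n]\le \mathbb{E}_{\mathbb{P}^*}\bigl[\mathbb{E}_{\mathbb{P}^*}[d^2(K,s_j)|\mathcal{F}_{n+1}]\big|\mathcal{F}_n\bigr]=\mathbb{E}_{\mathbb{P}^*}[d^2(K,s_j)|\mathcal{F}_n] \quad\text{a.s.} \]
Each inequality holds off a null set $N_j$. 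Since there are only countably many $j$, the union $\bigcup_j N_j$ is null, so the inequality holds simultaneously for all $j$ almost surely. Taking the infimum over $j$ then yields $\mathbb{E}_{\mathbb{P}^*}[V_{n+1}|\mathcal{F}_n]\le V_n$ a.s. This is exactly where permutation-invariance is used: $K$ is the same random element at every $n$, so the tower property applies to the same integrand. The argument is the metric analogue of the Law of Total Variance step in Proposition \ref{standardlearning}.

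I expect the main obstacle to be the measurability and null-set issue. The naive infimum over uncountably many conditional expectations, each defined only up to a null set, is not a priori measurable. The continuity estimate combined with separability is what resolves this, and that is the step I would write out most carefully. The pseudometric case mentioned in the text needs no change, since neither step uses $d(s,s')=0\Rightarrow s=s'$.
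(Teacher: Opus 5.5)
Your proposal is correct and follows essentially the same route as the paper. Both arguments bound integrability by $d^2(K,s)\le 2d^2(K,s_0)+2d^2(s_0,s)$, use a continuity estimate on a countable dense set to obtain a measurable version of the infimum, and then apply the tower property and take the infimum. The differences are minor. Your conditional Minkowski bound gives a cleaner 1-Lipschitz estimate for $s\mapsto \mathbb{E}_{\mathbb{P}^*}[d^2(K,s)|\mathcal{F}_n]^{1/2}$ than the paper's difference-of-squares bound. Also, by taking the final infimum only over the countable set $\{s_j\}$, you handle explicitly the null-set issue that the paper's last step passes over.
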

\begin{proof}
\textit{Well-posedness.} For each $s \in S$, the map $x \mapsto d^2(x,s)$ is continuous, hence $d^2(K,s)$ is $\mathcal{F}$-measurable. Moreover, for any $s \in S$, $d^2(K,s) \leq 2d^2(K,s_0) + 2d^2(s_0,s)$, so $d^2(K,s) \in L^1$ by the basepointed second-moment assumption. Thus $\mathbb{E}_{\mathbb{P}^*}[d^2(K,s)|\mathcal{F}_n]$ is a finite $\mathcal{F}_n$-measurable random variable for each $s$.

For $s, s' \in S$, $|d^2(K,s) - d^2(K,s')| \leq d(s,s')(2d(K,s_0) + d(s_0,s) + d(s_0,s'))$. Since $d(K,s_0) \in L^1$ by Jensen and the basepointing assumption, taking conditional expectations yields $|\mathbb{E}_{\mathbb{P}^*}[d^2(K,s)|\mathcal{F}_n] - \mathbb{E}_{\mathbb{P}^*}[d^2(K,s')|\mathcal{F}_n]| \leq d(s,s')(2\mathbb{E}_{\mathbb{P}^*}[d(K,s_0)|\mathcal{F}_n] + d(s_0,s) + d(s_0,s')) \quad \text{a.s.}$ for each fixed pair $(s,s')$. Fix a countable dense subset $\{s_k\} \subset S$. Restricting to pairs from $\{s_k\}$, the bound holds simultaneously for all pairs outside a single null set; the resulting map extends uniquely to a continuous version of $s \mapsto \mathbb{E}_{\mathbb{P}^*}[d^2(K,s)|\mathcal{F}_n](\omega)$ on $S$. Working with this version, $\inf_{s \in S} \mathbb{E}_{\mathbb{P}^*}[d^2(K,s)|\mathcal{F}_n] = \inf_k \mathbb{E}_{\mathbb{P}^*}[d^2(K,s_k)|\mathcal{F}_n]$ a.s., so the conditional Fr\'{e}chet variance is $\mathcal{F}_n$-measurable as a countable infimum of $\mathcal{F}_n$-measurable random variables. Finally, $0 \leq \inf_{s \in S} \mathbb{E}_{\mathbb{P}^*}[d^2(K,s)|\mathcal{F}_n] \leq \mathbb{E}_{\mathbb{P}^*}[d^2(K,s_0)|\mathcal{F}_n] < \infty$ a.s., implying $\inf_{s \in S} \mathbb{E}_{\mathbb{P}^*}[d^2(K,s)|\mathcal{F}_n] \in L^1$.

\textit{Inequality.} For any fixed $s \in S$, the tower property gives $\mathbb{E}_{\mathbb{P}^*}[d^2(K,s)|\mathcal{F}_n] = \mathbb{E}_{\mathbb{P}^*}[\mathbb{E}_{\mathbb{P}^*}[d^2(K,s)|\mathcal{F}_{n+1}]|\mathcal{F}_n] \geq \mathbb{E}_{\mathbb{P}^*}[\inf_{s' \in S}\mathbb{E}_{\mathbb{P}^*}[d^2(K,s')|\mathcal{F}_{n+1}]|\mathcal{F}_n]$, where the inequality is pointwise: for each $\omega$, the conditional expectation at $s$ is at least the infimum over $s'$. As the right-hand side is independent of $s$, taking the infimum over $s$ on the left gives the result.
\end{proof}

Note that in Hadamard spaces, such as BHV space \citep{BHV2001} or $\tau$-space  \citep{tauspace2016}, this minimiser is unique \citep{Sturm2003ProbabilityMO}, but this result holds in spaces without enough structure to guarantee the existence or uniqueness of a minimiser.

\begin{cor}[Extension to separable pseudometric spaces]
\label{pseudometriclearning}
Proposition \ref{metricstandardlearning} holds when $(S,d)$ is a separable pseudometric space.
\end{cor}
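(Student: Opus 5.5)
The plan is to follow the proof of Proposition \ref{metricstandardlearning} line by line and check where the metric axioms enter. The only properties of $d$ used there are non-negativity, symmetry and the triangle inequality. These give the bound $d^2(K,s)\leq 2d^2(K,s_0)+2d^2(s_0,s)$ and the Lipschitz-type estimate on $|d^2(K,s)-d^2(K,s')|$. Separability supplies the countable dense set $\{s_k\}$, and the pseudometric topology supplies continuity of $x\mapsto d^2(x,s)$. The identity of indiscernibles ($d(x,y)=0\Rightarrow x=y$) is never invoked, because the argument never needs a unique minimiser; it only takes an infimum. So every step carries over verbatim once measurability is settled.

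First, I would fix the measurable structure. $S$ is equipped with the Borel $\sigma$-algebra of the (possibly non-Hausdorff) pseudometric topology, and $K$ is measurable with respect to it. The map $x\mapsto d(x,s)$ is 1-Lipschitz for the pseudometric, hence continuous and Borel, so $d^2(K,s)$ is an $\mathcal{F}$-measurable random variable. The integrability bound follows from the triangle inequality exactly as before.

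Second, I would check that the continuous-version argument survives. The estimate on $|\mathbb{E}_{\mathbb{P}^*}[d^2(K,s)|\mathcal{F}_n]-\mathbb{E}_{\mathbb{P}^*}[d^2(K,s')|\mathcal{F}_n]|$ holds for all pairs in $\{s_k\}$ outside a single null set. Density in a pseudometric space means that every $s$ is a $d$-limit of some subsequence $s_{k_j}$. The Lipschitz bound then makes the map Cauchy along such sequences, and it defines a $d$-continuous version. Uniqueness of the extension is only needed up to the pseudometric: points at distance zero receive equal values, which is consistent because $d^2(K,s)=d^2(K,s')$ whenever $d(s,s')=0$, by the triangle inequality. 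Hence the infimum over $S$ equals the countable infimum over $\{s_k\}$ almost surely, and the Fréchet variance is $\mathcal{F}_n$-measurable and integrable.

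Third, the supermartingale inequality is purely pointwise: the tower property followed by an infimum. It uses no metric property at all, so it transfers unchanged.

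An alternative cross-check is to pass to the metric quotient $\tilde S=S/\!\sim$, where $x\sim y$ iff $d(x,y)=0$. The quotient is a separable metric space, and $d^2(K,s)$ depends on $s$ only through its class, so the conditional Fréchet variances coincide with those of $\tilde K=q(K)$ in $\tilde S$. Proposition \ref{metricstandardlearning} then applies directly. This route needs the quotient map $q$ to be Borel-measurable with $\tilde K$ measurable. That holds because the open sets of $S$ are exactly preimages of open sets of $\tilde S$ (pseudometric balls are saturated).

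The main obstacle is this measurability and topology bookkeeping in the non-Hausdorff setting, in particular justifying that the Borel structures match. The rest is verbatim repetition.
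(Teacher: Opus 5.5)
Your argument is correct, and your ``alternative cross-check'' is exactly the paper's proof. The paper passes to the metric identification $S/{\sim}$ with $\bar d(q(x),q(y))=d(x,y)$. It notes that separability descends through $\{q(s_k)\}$, and that $d(K,s)=\bar d(q\circ K,q(s))$, so the conditional expectations and the basepointed moment condition transfer. Since $q$ is surjective, the two infima agree, and the paper then invokes Proposition~\ref{metricstandardlearning} as a black box.

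Your primary route is genuinely different: you re-audit that proof and observe that the identity of indiscernibles is never used, because only an infimum is taken and a unique minimiser is never needed.

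The quotient route is shorter and modular, since any strengthening of the metric result transfers automatically. However, it tacitly needs $q\circ K$ to be a measurable $S/{\sim}$-valued random element, which the paper leaves implicit. You supply this: $q$ is distance-preserving, hence continuous and Borel, and in fact $\mathcal{B}(S)=q^{-1}(\mathcal{B}(S/{\sim}))$ because pseudometric-open sets are saturated.

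The direct route is self-contained and explains why the result holds, but it costs a step-by-step re-check. The delicate step is that the continuous version built on the countable dense set extends uniquely even though the topology is non-Hausdorff. As you note, this holds because a continuous real-valued function takes equal values at points at pseudodistance zero.
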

\begin{proof}
Define an equivalence relation on $S$ by $x \sim y \iff d(x,y)=0$, and let $q:S\to S/{\sim}$ be the quotient map. The induced function, $\bar d(q(x),q(y))=d(x,y)$, is a well-defined metric on $S/{\sim}$. Since $S$ is separable, so is $S/{\sim}$, as if $\{s_k\}$ is a countable dense subset of $S$, then $\{q(s_k)\}$ is dense in $S/{\sim}$.

For any $S$-valued estimand $K$ and any $s\in S$, $d(K,s)=\bar d(q\circ K,q(s))$ a.s., so $\mathbb{E}_{\mathbb{P}^*}[d^2(K,s)|\mathcal{F}_n]= \mathbb{E}_{\mathbb{P}^*}[\bar d^2(q\circ K,q(s))|\mathcal{F}_n]$, with $\mathbb{E}_{\mathbb{P}^*}[\bar d^2(q\circ K,q(s))|\mathcal{F}_n]< \infty$ by the transfer of the basepointed second moment condition. Since every element of $S/{\sim}$ is of the form $q(s)$ for some $s\in S$, it follows that $\inf_{s\in S}
\mathbb{E}_{\mathbb{P}^*}[d^2(K,s)|\mathcal{F}_n] = \inf_{z\in S/{\sim}} \mathbb{E}_{\mathbb{P}^*}[\bar d^2(q\circ K,z)|\mathcal{F}_n]$.
Thus the conditional Fr\'{e}chet variance of $K$ in the pseudometric space $(S,d)$ coincides with the conditional Fr\'{e}chet variance of $q\circ K$ in the separable metric space $(S/{\sim},\bar d)$. Applying Proposition \ref{metricstandardlearning} to $(S/{\sim},\bar d)$ gives the result.
\end{proof}

This applies, in particular, to the pseudometric on the Towering space of trees \citep{cabrera2025geometryspacephylogenetictrees}.

\bibliographystyle{apalike}
\bibliography{references}

\end{document}